\theoremstyle{plain} 
\newtheorem{thm}{Theorem}
\theoremstyle{definition}
\theoremstyle{remark}
\newcommand{\RR}{\mathbb{R}}
\newcommand{\N}{\mathsf{N}}
\newcommand{\E}{\mathsf{E}}
\newcommand{\V}{\mathsf{V}}
\newcommand{\prob}{\mathsf{P}}
\newcommand{\eps}{\varepsilon}
\newcommand{\iid}{\overset{\text{\tiny iid}}{\sim}}
\newcommand{\nm}{\mathsf{N}}
\newcommand{\biglp}{\Big{(}}
\newcommand{\bigrp}{\Big{)}}
\newcommand{\bigls}{\Big{[}}
\newcommand{\bigrs}{\Big{]}}
\newcommand{\biglb}{\Big{\{}}
\newcommand{\bigrb}{\Big{\}}}
\title{Variational empirical Bayes variable selection in high-dimensional logistic regression}
\author{Yiqi Tang\footnote{Department of Statistics, Colby College; {\tt ytang@colby.edu}} 
\quad and \quad Ryan Martin\footnote{Department of Statistics, North Carolina State University; {\tt rgmarti3@ncsu.edu}}
}
\date{\today}
\begin{document}

\maketitle 

\begin{abstract}
Logistic regression involving high-dimensional covariates is a practically important problem.  Often the goal is variable selection, i.e., determining which few of the many covariates are associated with the binary response.  Unfortunately, the usual Bayesian computations can be quite challenging and expensive.  Here we start with a recently proposed empirical Bayes solution, with strong theoretical convergence properties, and develop a novel and computationally efficient variational approximation thereof.  One such novelty is that we develop this approximation directly for the marginal distribution on the model space, rather than on the regression coefficients themselves.  We demonstrate the method's strong performance in simulations, and prove that our variational approximation inherits the strong selection consistency property satisfied by the posterior distribution that it is approximating. 
\smallskip

\emph{Keywords and phrases:} data-driven prior; generalized linear model; model selection consistency; posterior concentration; variational approximation.
\end{abstract}

\section{Introduction}
\label{S:intro}

Generalized linear models \citep[GLMs,][]{mccullagh.nelder.1983} are among the most widely used statistical models in applications.  Notable special cases include the standard Gaussian linear model for real-valued response variables and the logistic regression model for binary response variables.  The present paper focuses on the latter logistic regression model, which 
assumes that the binary $y_1, y_2, \ldots, y_n$ are independently distributed observations with mass function 
\[ f_\beta(y_i \mid x_i) = \exp\bigl[ y_i x_i^\top\beta - \log\{1+\exp(x_i^\top\beta)\} \bigr], \quad i=1,\ldots,n, \] 
where $x_i$ denotes row $i$ of the (assumed fixed) design matrix $X$, and $\beta$ is a $p\times 1$ vector of coefficients.  Software, including the {\tt glm} function in R, is available to fit such a model to observed data and return hypothesis tests and confidence intervals based on the standard asymptotic normality of the maximum likelihood estimators (MLEs).  

Recent research efforts have focused on high-dimensional settings in which the number of predictor variables, $p$, far exceeds the number of observations, $n$. Compared to the classical low-dimensional case, with $p < n$, if $p \gg n$, then it is not possible to fit such a model adequately using maximum likelihood alone: for one reason or another, regularization is required. Frequentist methods such as lasso \citep{tibshirani1996} and ridge regression \citep{hoerl.kennard.1970} add a penalty term to achieve this regularization. Many have investigated the performance of such methods in high-dimensional linear regression \citep[e.g.,][]{zhang.huang.2008}, and even in high-dimensional generalized linear models such as logistic regression \citep[e.g.,][]{salehi2019impact, wu2009genome, meier2008group}. On the Bayesian side, the standard Markov chain Monte Carlo (MCMC) methods employed in low-dimensional cases do not adequately scale to high-dimensional cases.  
For this and other reasons, there is a genuine interest in developing alternatives to the Bayesian's prior--model--MCMC pipeline in logistic regression and elsewhere.  These alternatives might involve a choice of prior that simplifies computations, it might involve alternatives to MCMC-based computation, or a combination of both.


On the choice-of-prior front, there are mainly two types of priors used in high-dimensional analysis, continuous shrinkage priors such as the horseshoe \citep[e.g.,][]{carvalho.polson.scott.2010, piironen2017sparsity} and spike-and-slab priors \citep[e.g.,][]{castillo.vaart.2012, belitser2020empirical}. Unfortunately, those priors with good concentration rate properties have heavier tails \citep[e.g.,][]{jeong2021posterior} and the corresponding posterior computations are more challenging and expensive, while thin-tailed conjugate Gaussian priors offer simplified posterior computations but have sub-optimal theoretical convergence properties \citep{castillo.vaart.2012}. For this reason, \citet{martin.mess.walker.eb} propose the use of data to center the prior, noting that if the prior is properly centered, then the tails would not matter much. With a strategic centering, the thin-tailed conjugate Gaussian prior can be used, simplifying the derivations without compromising on the computation efficiency. They use an empirically-driven conjugate prior, then combine it with the likelihood in almost the usual Bayesian way to form a data-driven posterior distribution. They show that this empirical prior can achieve the optimal minimax posterior concentration rate in high-dimensional linear regression. The simplicity and flexibility of this idea has been shown in a number of different applications, including monotone density estimation \citep{martin2019empirical}, high-dimensional Gaussian graphical models \citep{eb.gwishart}, piecewise polynomial sequence models \citep{ebpiece}, and particularly relevant to our work here, in high-dimensional generalized linear models \citep{tang2024empirical, lee.chae.martin.glm}. General theoretical results for this brand of empirical Bayes posteriors are presented in \citet{martin2019data}.

On the alternative-to-MCMC front, variational inference is now quite standard \citep[e.g.,][]{blei.etal.vb.2017, dhaka2021challenges, zhang2020convergence}. Instead of integration via Monte Carlo, the problem is converted to one that only requires optimization, and allows for fast computation without sacrificing (too much) on the theoretical properties. With the aforementioned empirical priors approach, variational inference can also be employed to aid in computations. Previous work along these lines in \citet{yang2020variational} focused on high-dimensional Gaussian linear regression.  In this paper, we focus on high-dimensional logistic regression, and present a novel variational approximation to the empirical priors posterior of \citet{tang2024empirical}, solving the problem with an optimization algorithm rather than MCMC. 

Variational inference in high-dimensional logistic regression itself is not new \citep[e.g.,][]{ray2020spike, zhang2019novel, jaakkola2000bayesian, guoqiang2022variational}. The novelty of our proposed method is that, instead of a more traditional Gaussian or Laplace prior, ours uses an empirically-centered prior, and the particular form gives us access to a relatively simple expression for the marginal posterior mass function of the configuration $S$, the active set that correspond to the active coefficients in the $\beta$ vector.  This expression, unfortunately, does not directly allow for inference on $S$, so we propose a variational approximation directly on this marginal posterior.  Rather than the relatively complicated variational approximation with the mean-field family---independent mixtures of Gaussian and point-mass distributions---as is common in the literature, we propose a simple independent-Bernoulli approximation to the marginal posterior for $S$, which yields a much simpler and transparent approximation.  This transparency allows us to easily establish, in Theorem~\ref{thm:vb.consistent} below, that the proposed variational approximation shares the same strong selection consistency property as the marginal posterior for $S$ that it is approximating.  There are results of this type for high-dimensional linear regression \citep[e.g.,][]{huang2016variational, guoqiang2022variational, ormerod.etal.2017}, and also some general results for other brands of variational approximations \citep{ohn2024adaptive}, but we are not aware of any results in the literature on selection consistency for variational approximations in high-dimensional logistic regression. Our simulations show that this method performs well compared to other existing methods, both Bayesian and frequentist, and that this method produces results that approximate the inclusion probabilities from the MCMC method detailed in \citet{tang2024empirical} well. Our method can easily and efficiently accommodate the large $n$ and $p$ settings that are not feasible with MCMC. 

The remainder of the paper is organized as follows. Section~\ref{S:background} briefly reviews variational inference and the empirical prior/posterior construction and properties as presented in \citet{tang2024empirical} and further studied in \citet{lee.chae.martin.glm}.  In Section~\ref{S:VIpast}, we give an overview of the variational inference framework commonly used in high-dimensional linear regression. Then, in Section~\ref{S:eb}, we introduce the setup of the problem and our empirically-centered prior and its associated posterior. In Section~\ref{S:VI}, we propose our novel variational approximation for said posterior, and in Section~\ref{S:CAVI} detail the coordinate ascent variational inference algorithm for computation and its derivation. Numerical results comparing our method with others are presented in Section~\ref{S:chap4results}, including comparisons with the EB-MCMC method from \citet{tang2024empirical}. Finally, Section~\ref{S:chap4discuss} offers a discussion of our method as well as directions for further work.

\section{Background}
\label{S:background}

\subsection{Variational inference}
\label{S:VIpast}

We focus our review here on the high-dimensional linear regression context. The key ideas and principles of variational inference are roughly the same for other problems.  Write $y = X\beta + \eps$, where $y$ is a $n\times 1$ vector of observed data, $X$ is a $n\times p$ matrix, $\beta$ is a $p\times 1$ vector of coefficients, and $\eps$ is a $n\times 1$ vector of error. Re-express the vector $\beta$ as a pair $(S, \beta_S)$, where $S\subset\{1, 2, \ldots, p\}$ is the set of indices that correspond to active signals and $\beta_S$ consists of the non-zero values corresponding to the configuration $S$.  

Express $S$ as a binary $p$-vector, where $S_j=1$ if variable $j$ is in $S$ and $S_j=0$ otherwise. Following \citet{ray2022variational}, write the variational approximation as 
\begin{equation}
\label{eq:q.theta}
q_\theta(S, \beta_S) = \prod_{j=1}^p q_{j, \theta}(\beta_j \mid S_j) \, q_{j,\theta}(S_j),
\end{equation}
where
\begin{align*}
q_{j,\theta}(S_j) & = 
        \begin{cases}
        \phi_j, & S_j=1\\
        1-\phi_j, & S_j=0
        \end{cases} \\
q_{j, \theta}(\beta_j \mid S_j) & = 
        \begin{cases}
            \N(\beta_j \mid \mu_j, \tau_j^2), & S_j=1\\
            \delta_0(\beta_j), & S_j=0.
        \end{cases}
\end{align*}
Here, the variational parameter $\theta$ consists of three $p$-vectors, $\theta=(\mu, \tau^2, \phi)$. The $\beta_j$'s are taken to be independently distributed from a mixture of Gaussian and point-mass, i.e., $\beta_j\sim \phi_j\N(\mu_j, \tau^2_j) + (1-\phi_j)\delta_0$. Collecting all the $\beta_j$'s together, we have a family of approximate densities 
\[ \mathcal{L} = \Bigl\{ \bigotimes_{j=1}^p
\{\phi_j\N(\mu_j, \tau^2_j) + (1-\phi_j)\delta_0 \}: \mu_j\in \RR, \, \tau^2_j >0, \, \phi_j\in[0,1] \Bigr\}.\]
This is known as a mean-field family, commonly used in variational inference \citep{blei.etal.vb.2017}. It allows for a relatively accurate approximation that is still easy to compute, thanks to the independence of the $\beta_j$'s. 

If $\pi^n$ is the joint posterior for $(S,\beta_S)$ to be approximated, and $q_\theta$ is the approximation family in \eqref{eq:q.theta}, then the {\em evidence lower bound} (ELBO) is defined as 
\[K(\theta) = \E_{(S, \beta_S)\sim q_\theta}\log\{{\tilde\pi}^n(S, \beta_S)/q_\theta(S, \beta_S)\},\]
where $\tilde\pi^n$ is the un-normalized version of $\pi^n$. With an optimization algorithm such as coordinate or gradient ascent, the optimal variational parameter is obtained by $\hat{\theta} = \arg \max_\theta K(\theta)$ and the original posterior distribution can now be approximated with $q_{\hat{\theta}}$.

This approximation has a number of drawbacks. The variational parameter $\theta$ is of dimension $3p$, which is a huge space to optimize over. The algorithm needs to approximate over a large number of dimensions, and as one could imagine, the more dimensions there are, the more difficult it is to obtain an accurate approximation. This is the primary motivation behind our proposed variational approximation on $S$ itself, yielding a much simpler approximation with only a $p$-vector $\phi$ in $[0,1]^p$ as our variational parameter.

\subsection{Empirical priors and logistic regression}
\label{S:eb}


Here we review the empirical prior and posterior construction for logistic regression as proposed in \citet{tang2024empirical}.  First, we introduce a bit more notation.  

As above, any $p$-vector $\beta$ can be expressed as $(S,\beta_S)$, the model structure and structure-specific coefficients.  For a particular structure $S$, write $X_S$ for the submatrix determined by keeping only the columns of $X$ whose column index is contained in $S$. Then, the log-likelihood function at $\beta \equiv (S,\beta_S)$ is $\ell_n(S,\beta_S) = \sum_{i=1}^n y_ix_{i,S}^\top\beta_S - \log\{1+\exp(x_{i,S}^\top\beta_S)\}$, where $x_{i,S}$ is the $i^\text{th}$ row (expressed as a column vector) of the matrix $X_S$.  For a given $S$, the MLE $\hat\beta_S$ of $\beta_S$ is found by solving the likelihood equation, $\dot\ell_n(S,\beta_S)=0$.  Then the observed Fisher information is $J_n(S) = -\ddot\ell_n(S,\hat\beta_S) = X_S^\top W(S,\hat\beta_S) X_S$, where $W(\cdot,\cdot)$ is a diagonal matrix with entries 
\[ W_{ii}(S,\beta_S) = \frac{\exp(x_{i,S}^\top \beta_S)}{\{1+\exp(x_{i,S}^\top \beta_S)\}^2}, \quad i=1,\ldots,n. \] 

Now we are ready to introduce our empirical prior for $\beta=(S,\beta_S)$, and we do so hierarchically: first a marginal prior for $S$ and then a conditional prior for $\beta_S$, given $S$.  The marginal prior for $S$ is informative and does not depend on the data; we simply use the complexity prior from \citet{castillo.vaart.2012}, i.e., 
\[ \pi_n(S) \propto \textstyle\binom{p}{|S|}^{-1} \, p^{-a|S|}, \]
where $a > 0$ is a hyperparameter.  Next, the data-dependent conditional prior is
\[ (\beta_S \mid S) \sim \nm_{|S|}\bigl( \hat\beta_S, \gamma J_n(S)^{-1} \bigr), \]
where $\gamma > 0$ is another hyperparameter.  Note first that the prior is centered at the $S$-specific MLE, and, second, that if $\gamma$ is constant/bounded, then the variance is $O(n^{-1})$.  Intuitively, there can be no benefit to the empirically-driven center if the spread is not relatively tight, and it has been shown in \citet{martin.mess.walker.eb}, \citet{tang2024empirical}, and elsewhere that this is the amount of spread needed for good posterior concentration properties.  The prior on the complexity $|S|$ is rather strong so, in practice it is beneficial to choose a small value for $a$.  Similarly, with the data-driven choice of center, it is advantageous to choose a small value for $\gamma$ so that the prior for $\beta_S$ is rather tightly concentrated.  In our examples below, we take $a=0.01$ and $\gamma = 0.1$. Putting the two pieces together, we can write the full empirical prior for the $p$-vector $\beta$ as 
\[ \Pi_n(d\beta) = \sum_S \pi_n(S) \, \nm_{|S|}\bigl(d\beta_S \mid \hat{\beta}_S, \gamma J_n(S)^{-1} \bigr) \times \delta_0(d\beta_{S^c}), \quad \beta \in \RR^p. \]

Following \citet{martin.mess.walker.eb}, the posterior distribution combines the likelihood and prior in almost the usual Bayesian way, i.e., 
\[ \Pi^n(d\beta) \propto L_n(\beta)^\alpha \Pi_n(d\beta), \quad \beta \in \RR^p, \]
where $\alpha \in (0,1)$ is a constant to be specified. Since data is used in both the prior and the likelihood, the use of $\alpha$ allows us to slightly discount the likelihood portion, which helps us steer clear of any problematic double-use of data. In practice, we take $\alpha$ to be very close to 1 ($\alpha=0.99$), so our posterior does not differ too much from a genuine Bayesian posterior with $\alpha=1$. The use of a discount factor $\alpha$ is now quite common in the (generalized) Bayes literature \citep[e.g.,][]{grunwald2012, jeong2021posterior, miller.dunson.power, walker.hjort.2001, syring.martin.scaling}.

In principle, we can integrate out the $\beta_S$ to obtain a marginal posterior on $S$. Unlike the linear model case in \citet{martin.mess.walker.eb}, however, this marginal posterior for $S$ is not available in closed-form, so we employ Laplace's method to get the approximation
\begin{equation}
    \pi^n(S)\propto \pi_n(S) (1+\alpha\gamma)^{-|S|/2}L_n(S, \hat{\beta_S})^\alpha.
\label{eq:laplace}
\end{equation}
For more details on these derivations, see \citet{tang2024empirical}. Laplace approximation is not always accurate, but has been shown to work well in certain configurations \citep[e.g.,][]{barber2016laplace, shun1995laplace}. Fortunately, we only need the approximation to be accurate for sparse configurations, i.e., small $|S|$ spaces, made precise in Appendix A.2 of \citet[][App.~A.2]{tang2024empirical} and \citet[][Sec.~5.1]{lee.chae.martin.glm}. 

With this approximation, we can now use Metropolis--Hastings to (approximately) sample directly from the posterior for $S$, obtaining inclusion probabilities for each variable for variable selection.  We define the inclusion probability for variable $j$ as 
\begin{equation}
\label{eq:ip}
\pi^n(S_j=1) := \pi^n(\{S: S \ni j\}) \approx \frac{1}{M} \sum_{m=1}^M 1\{S^{(m)} \ni j\}, 
\end{equation}
the proportion of sampled configurations---the $S^{(m)}$'s---that include variable $j$.  This is precisely the strategy utilized in previous work. Specifically, the Metropolis--Hastings algorithm uses a symmetric proposal distribution that allows the algorithm to move in the $S$ space from state $S$ to a new $S'$ that is different from $S$ in only one position of $S$. This works well with moderate $p$, but becomes increasingly inefficient as $p$ grows, hence our motivation to develop a more efficient strategy.

\section{A new variational approximation}
\label{S:VI}

\subsection{Construction}

Thanks to Laplace approximation, our marginal posterior for $S$ has a very nice form, at least up to normalization. We take advantage of the simple expression in Equation~\eqref{eq:laplace} by applying our variational approximation directly to this marginal posterior. That is, instead of what is typically done in the literature, where the variational approximation is applied to the joint posterior for $(S, \beta_S)$, we propose to focus directly on a variational approximation of the marginal posterior for $S$, simplifying the problem. 

Recall that the configuration $S$ can be interpreted as both a subset of $\{1,2,\ldots,p\}$ and a binary $p$-vector $S=(S_1,\ldots,S_p)$ with $S_j=1$ indicating that the variable $j$ is active and $S_j=0$ indicating that it is inactive.  Taking the latter interpretation, a natural choice of approximation to the marginal posterior of $S$ is an independent Bernoulli model.  That is, our proposed approximate marginal mass function for $S$ is 
\[q_\phi(S) = \prod_{j=1}^p \phi_j^{S_j} (1-\phi_j)^{1-S_j},\]
where $\phi = (\phi_1, \phi_2, ..., \phi_p)^\top \in [0,1]^p$ is the variational parameter to be determined.  The goal is to find the value $\phi$ that minimizes the Kullback--Leibler divergence of $\pi^n$ from $q_\phi$.  In other words, we aim to minimize the objective function 
\begin{equation}
K(\phi) = \E_{S\sim q_\phi} \Bigl\{ \log\frac{q_\phi(S)}{\pi^n(S)} \Bigr\}.
    \label{eq:KL}
\end{equation}
For our theoretical investigations below, we will keep the full marginal posterior $\pi^n$ in the denominator, but for our numerical implementation, we will replace $\pi^n$ by its un-normalized version $\tilde\pi^n$ since this simplification does not affect the shape of $\phi \mapsto K(\phi)$.  

The fact that our variational approximation is simpler than those commonly found in the literature makes it more transparent in some ways.  In particular, it is generally not clear whether a posterior approximation would inherit the statistical properties (e.g., asymptotic concentration rate) enjoyed by the posterior distribution it is approximating.  Considerable effort has been spent to prove that, in certain cases, the variational approximations do, in fact, inherit at least some of the posterior's desirable properties \citep[e.g.,][]{ray2020spike, ray2022variational,zhang2020convergence, alquier2020concentration}.  To our knowledge, one property that the aforementioned references do not establish is {\em selection consistency} of the variational approximation.  One possible reason for this gap in the existing literature is that the general tool used to transfer properties of the posterior distribution to the variational approximation requires certain exponential bounds on the posterior, behavior that have yet to be established for the selection consistency-related properties.  Since our post-marginalization approximation is simpler, we should be able to attack the problem directly and, indeed, we show below that our variational approximation $q_{\hat\phi}$ can achieve selection consistency, just like our marginal posterior $\pi^n$.

\subsection{Selection consistency}

To set the scene, let $\beta^\star$ denote the true coefficient (a $p$-vector with $p=p_n \to \infty$) and let $S^\star$ denote the true configuration.  Recall that $\beta^\star$ and $S^\star$ are actually sequences indexed by $n$, e.g., $S^\star$ can be interpreted as a binary $p_n$-vector with the only constraint being a limit on how many 1's it can contain. 
Under this setup, \citet[][Theorem~4]{tang2024empirical} showed that, under certain conditions, their empirical prior-based marginal posterior distribution for $S$ satisfies $\pi^n(S^\star) \to 1$ as $n \to \infty$ in $\prob_{\beta^\star}$-probability; note that this implies $\sum_{S \neq S^\star} \pi^n(S) \to 0$ in $\prob_{\beta^\star}$-probability.  A stronger selection consistency result under weaker conditions is established in \citet[][Corollary~6.4]{lee.chae.martin.glm}.  Since the specific conditions for model selection consistency are rather technical, and are not particularly relevant to the contributions in the present paper, we refer the reader to \citet[][Sec.~6.2]{lee.chae.martin.glm} for these details.  The one point that is relatively simple and deserves mention is that model selection consistency requires that the non-zero coefficients in the true $\beta^\star$ vector not be too small; this is the familiar {\em beta-min} condition.  Corollary~6.4 in \citet{lee.chae.martin.glm} shows that, if $|S^\star| \log p = o(n)$ and some other technical conditions (on the design matrix, prior hyperparameters, etc.)~are met, then model selection consistency holds for our marginal posterior distribution $\pi^n$ for $S$ under a beta-min condition arbitrarily close to the optimal beta-min condition, which is of the form 
\begin{equation}
\label{eq:betamin}
\min_{j \in S^\star} |\beta_j^\star| \gtrsim (n^{-1} \log p )^{1/2}.
\end{equation}
Having (nearly) the best possible beta-min condition means that model selection consistency holds under (nearly) the weakest possible conditions on $\beta^\star$.  For comparison, the corresponding beta-min condition in \citet{tang2024empirical} is strictly stronger than the near-optimal one established by \citet{lee.chae.martin.glm}.  The following theorem establishes that the proposed variational approximation of the original marginal posterior inherits the same strong selection consistency property.  

\begin{thm}
\label{thm:vb.consistent}
If the posterior distribution $\pi^n$ is model selection consistent---which it is under the conditions detailed in Section~6.2 of \citet{lee.chae.martin.glm}, including a beta-min condition arbitrarily close to that in \eqref{eq:betamin}---then the proposed variational approximation is model selection consistent too.  That is, $q_{\hat\phi}(S^\star) \to 1$ in $\prob_{\beta^\star}$-probability as $n \to \infty$.
\end{thm}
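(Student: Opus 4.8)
The plan is to exploit two elementary facts: that the feasible set $\{q_\phi : \phi \in [0,1]^p\}$ for the variational parameter contains the point mass $\delta_{S^\star}$ (realized at a corner of the cube), and that a Kullback--Leibler minimizer over this set can therefore do no worse than that point mass. Concretely, let $\phi^\circ \in [0,1]^p$ have $\phi^\circ_j = 1$ for $j \in S^\star$ and $\phi^\circ_j = 0$ for $j \notin S^\star$; then $q_{\phi^\circ} = \delta_{S^\star}$, and, since the Bernoulli entropy contributions vanish at the boundary under the convention $0\log 0 = 0$, a direct computation gives $K(\phi^\circ) = \mathrm{KL}(\delta_{S^\star}\,\|\,\pi^n) = -\log \pi^n(S^\star)$. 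Because $\hat\phi$ minimizes $K(\cdot) = \mathrm{KL}(q_\cdot\,\|\,\pi^n)$, the first step would simply record the chain $0 \le K(\hat\phi) \le K(\phi^\circ) = -\log \pi^n(S^\star)$.

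The second step would invoke the hypothesis: model selection consistency of $\pi^n$ is, by definition, the statement $\pi^n(S^\star) \to 1$ in $\prob_{\beta^\star}$-probability, whence $-\log \pi^n(S^\star) \to 0$ and, by a squeeze from the first step, $K(\hat\phi) = \mathrm{KL}(q_{\hat\phi}\,\|\,\pi^n) \to 0$ in $\prob_{\beta^\star}$-probability. The third and final step transfers this to the event $\{S = S^\star\}$: by Pinsker's inequality the total variation distance obeys $\mathrm{TV}(q_{\hat\phi},\pi^n) \le \{\tfrac12 K(\hat\phi)\}^{1/2} \to 0$ in probability, and since total variation dominates the gap between the masses any two distributions assign to a fixed set, $|q_{\hat\phi}(S^\star) - \pi^n(S^\star)| \le \mathrm{TV}(q_{\hat\phi},\pi^n) \to 0$ in probability; combining with $\pi^n(S^\star)\to 1$ yields $q_{\hat\phi}(S^\star)\to 1$ in $\prob_{\beta^\star}$-probability, which is the claim. (An alternative to Pinsker is to apply the data-processing inequality to the two-cell partition $\{\{S^\star\},\{S\ne S^\star\}\}$, bounding $K(\hat\phi)$ below by the binary Kullback--Leibler divergence between $\ber(q_{\hat\phi}(S^\star))$ and $\ber(\pi^n(S^\star))$ and then using the scalar bound $2(u-v)^2$.)

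I do not anticipate a genuine obstacle; the transparency gained by approximating the marginal posterior on $S$ directly, rather than the joint mean-field on $(S,\beta_S)$, is precisely what collapses the argument to this short KL/Pinsker chain and sidesteps the delicate exponential-tail bounds that the usual ``property transfer'' theorems demand. The only points requiring care are bookkeeping ones: that the minimizer $\hat\phi$ is well defined (the map $\phi\mapsto K(\phi)$ is lower semicontinuous on the compact set $[0,1]^p$ and finite at $\phi^\circ$, so the infimum is attained); that $\pi^n$ charges every configuration --- which it does, since $\pi_n(S) > 0$ and $L_n(S,\hat\beta_S) > 0$ --- so $K$ never spuriously equals $+\infty$; and the $0\log 0 = 0$ convention at the corner $\phi^\circ$. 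A small bonus of this route is that it is quantitative: it gives $1 - q_{\hat\phi}(S^\star) \le \{1 - \pi^n(S^\star)\} + \{-\tfrac12\log \pi^n(S^\star)\}^{1/2}$, so the selection error of the variational approximation is explicitly controlled by that of $\pi^n$.
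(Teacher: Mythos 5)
Your proposal is correct, and its first half is exactly the paper's own opening move: the paper also takes the corner point $\phi^\star$ with $\phi^\star_j = S^\star_j$, notes that $q_{\phi^\star}$ is the point mass at $S^\star$ so that $K_n(\phi^\star) = -\log\pi^n(S^\star) \to 0$, and uses this to control the minimizer. Where you diverge is in how the bound $K(\hat\phi) \le -\log\pi^n(S^\star) \to 0$ is converted into the conclusion $q_{\hat\phi}(S^\star)\to 1$. The paper does this by splitting $K_n(\phi)$ into the $S=S^\star$ term plus the sum over $S\ne S^\star$, and arguing that since $\pi^n(S)\to 0$ for every $S\ne S^\star$, each term $q_\phi(S)\log\{q_\phi(S)/\pi^n(S)\}$ would blow up unless $q_\phi(S)\to 0$; the boundedness of $K_n(\hat\phi)$ then forces $q_{\hat\phi}(S^\star)\to 1$. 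You instead pass through Pinsker's inequality (or, equivalently, data processing applied to the partition $\{\{S^\star\},\{S\ne S^\star\}\}$), getting $|q_{\hat\phi}(S^\star)-\pi^n(S^\star)| \le \mathrm{TV}(q_{\hat\phi},\pi^n) \le \{\tfrac12 K(\hat\phi)\}^{1/2}\to 0$. Your finishing step is arguably cleaner: it avoids the somewhat informal ``the only way to prevent $K_n(\phi)\to\infty$'' argument, it does not require tracking the individual terms of the KL sum, and it is quantitative, yielding the explicit bound $1-q_{\hat\phi}(S^\star) \le \{1-\pi^n(S^\star)\} + \{-\tfrac12\log\pi^n(S^\star)\}^{1/2}$ that ties the variational selection error directly to that of $\pi^n$. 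The paper's route, on the other hand, makes transparent \emph{why} the mass must leave every $S\ne S^\star$ (namely, because $\pi^n$ itself vacates those configurations), which is the intuition the authors wish to emphasize. Your bookkeeping remarks (attainment of the minimum, positivity of $\pi^n$ on all configurations, the $0\log 0=0$ convention) are all points the paper glosses over, and they are handled correctly.
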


\begin{proof}
Write the objective function as $K=K_n$ to make the (previously implicit) dependence on $n$ and $p=p_n$ explicit.  We first show that $K_n(\phi)$ does not converge to $\infty$ as $n \to \infty$ for all $\phi \in [0,1]^p$.  Take $\phi^\star$ such that $\phi_j^\star = S_j^\star$, i.e., $\phi^\star$ is just $S^\star$ interpreted as a binary indicator vector.  For this choice, ``$S \sim q_{\phi^\star}$'' is a degenerate distribution at $S^\star$, with $q_{\phi^\star}(S^\star)=1$, so $K_n(\phi^\star) = -\log \pi^n(S^\star)$.  By assumption, $\pi^n(S^\star) \to 1$ in $\prob_{\beta^\star}$-probability and, consequently, $K_n(\phi^\star) \to 0$.  This implies that $K_n(\phi) \not\to \infty$ uniformly in $\phi$.  Rewriting the objective function $K_n$ as 
\begin{align*}
K_n(\phi) 
& = q_\phi(S^\star) \, \log \frac{q_\phi(S^\star)}{\pi^n(S^\star)} + \sum_{S \neq S^\star} q_\phi(S) \, \log \frac{q_\phi(S)}{\pi^n(S)},
\end{align*}
makes the following observation clear: the only way to prevent $K_n(\phi) \to \infty$---which we know is possible by the argument above---is if $q_{\phi}(S) \to 0$ for all $S \neq S^\star$.  But this implies that $q_{\phi}(S^\star) \to 1$ in $\prob_{\beta^\star}$-probability as $n \to \infty$.  The above remarks must apply to $\phi=\hat\phi$, a minimizer of $K_n$, and the desired result follows.  
\end{proof}

Note that, if $\pi^n$ was independent in the sense that the joint mass function for $S$ factored as the product of the $S_j$ marginal mass functions, then it would immediately follow that $\hat\phi_j$ equals $\pi^n(S_j = 1)$ for all $j=1,\ldots,p$.  Of course, the marginal posterior for $S$ under $\pi^n$ surely will not factor in this way, but, in the asymptotic limit, where $\pi^n$ concentrates all of its mass on $S^\star$, this factorization does hold.  Therefore, the $\hat\phi_j$ values should be close to $\pi^n(S_j=1)$, the {\em inclusion probability}.  Since the inclusion probability is what was used to carry out variable selection in the MCMC strategy used by \citet{tang2024empirical}, and since $\hat\phi_j$ is roughly approximating the inclusion probability, we can expect that the variable selection performance by our proposed variational approximation here is comparable to that of the MCMC-based method presented in \citet{tang2024empirical}---but with much faster computation.  We investigate the proximity of $\hat\phi$ to the inclusion probabilities produced by MCMC in Section~\ref{SS:eb.vi.compare}.

\subsection{Implementation}
\label{S:CAVI}

Unfortunately, directly minimizing \eqref{eq:KL} is a major challenge for non-linear models like logistic regression and other GLMs.  Note that the log-likelihood term depends on $S$ in a very complex, non-linear way, so we cannot evaluate the expected value of the log-likelihood with respect to $S \sim q_\phi$ analytically.  Alternatively, we follow \citet{ray2020spike} and introduce a new parameter $\eta$ and the following lower bound on the log-likelihood:
\begin{equation}
\ell_n(S, \beta_S)\geq \ \sum_{i=1}^n \biglb\log\tfrac{\exp(\eta_i)}{1+\exp(\eta_i)} - \tfrac{\eta_i}{2} + (y_i-\tfrac{1}{2} ) M_i(S, \beta) - \tfrac{\tanh(\eta_i/2)}{4\eta_i}[M_i(S, \beta)^2 - \eta_i^2]\bigrb,
\label{eq:logisticbound}
\end{equation}
where $M_i(S, \beta) = \sum_{j\in S} x_{ij}\beta_j$.  This lower bound on the log-likelihood leads to an upper bound on the original objective function $\phi \mapsto K(\phi)$ in \eqref{eq:KL}, so the new proposed strategy is to minimize this upper bound.  

Since $\hat{\beta}$ is the MLE, $L_n(S, \hat{\beta}_S) \geq L_n(S, \tilde{\beta}_S)$, where $\tilde{\beta}_S$ is the coefficient estimator from another method, such as lasso or SCAD. Denoting the right-hand-side of Equation~\eqref{eq:logisticbound} as $g_n(S, \beta_S)$ and plugging in Equation~\eqref{eq:laplace}, our objective function becomes
\begin{align*}
K(\phi) &= \E_{S\sim q_\phi} \Bigl\{ \log \frac{q_\phi(S)}{\pi_n(S) (1+\alpha\gamma)^{-|S|/2}L_n(S, \hat{\beta_S})^\alpha} \Bigr\} \\
&\leq \E_{S\sim q_\phi} \Bigl\{ \log q_\phi(S)-\log\pi_n(S)+\frac{|S|}{2}\log(1+\alpha\gamma)-\alpha g_n(S, \tilde{\beta}_S) \Bigr\},
\end{align*}
where $\pi_n(S)$ is the complexity prior on $S$ defined in Section~\ref{S:background}. We develop a coordinate-ascent variational inference (CAVI) algorithm to find its solution.

Our CAVI algorithm searches for the maximizer of $-K(\phi)$. Thanks to the introduction of the free parameter $\eta = (\eta_1, \eta_2, \ldots, \eta_n)^\top$, we have a closed-form equation for the lower bound approximation:
\begin{align*}
    -K(\phi) &\geq \E_{S\sim q_\phi} \Bigl\{ -\log q_\phi(S)+\log\pi_n(S)-\tfrac{|S|}{2}\log(1+\alpha\gamma)+\alpha g_n(S, \tilde{\beta}_S) \Bigr\} \\
    &= -\E_{S\sim q_\phi}\{ \log q_\phi(S) \} +\E_{S\sim q_\phi}\{ \log\pi_n(S) \} - \tfrac12 \E_{S\sim q_\phi}\{ |S| \, \log(1+\alpha\gamma) \} \\
    &\qquad + \alpha \E_{S\sim q_\phi}\{ g_n(S, \tilde{\beta}_S) \}.
\end{align*}
We will simplify each of the four components of this lower bound. The first component can be evaluated directly: 
\begin{align*}
    \E_{S\sim q_\phi} \{ \log q_\phi(S) \} &= \E_{S\sim q_\phi}\Bigl[ \sum_{j=1}^p \{ S_j\log(\phi_j)+(1-S_j)\log(1-\phi_j)\} \Bigr] \\
    & = \sum_{j=1}^p \{ \phi_j\log(\phi_j)+(1-\phi_j)\log(1-\phi_j) \}.
\end{align*} 
The third component is also straightforward, 
\begin{align*}
\E_{S\sim q_\phi} \{ |S| \, \log(1+\alpha\gamma) \} &= \log(1+\alpha\gamma) \, \sum_{j=1}^p \phi_j
\end{align*}
The second component cannot be evaluated closed-form, but the standard inequality, $\log\binom{p}{s} \leq s (1 + \log p)$, gives us a simple lower-bound:
\begin{align*}
\E_{S\sim q_\phi}\{ \log\pi_n(S)\} &= \E_{S\sim q_\phi}\{ -\log\textstyle\binom{p}{|S|}-a|S|\log p \} \\
&\geq \E_{S\sim q_\phi}\{ -|S|(1+\log p) - a|S|\log p \} \\
&= -\sum_{j=1}^p \phi_j\{ 1+\log p + a\log p \}.
\end{align*}
The fourth component requires us to evaluate the two expected values $\E_{S\sim q_\phi} M_i(S, \tilde{\beta})$ and $\E_{S\sim q_\phi}M_i(S, \tilde{\beta})^2$, where $M_i(S, \tilde{\beta}) = \sum_{j\in S} x_{ij}\tilde{\beta}_j = \sum_{j=1}^p S_jx_{ij}\tilde{\beta}_j$. This gives
\begin{align*}
    \E_{S\sim q_\phi} M_i(S, \tilde{\beta}) = \E_{S\sim q_\phi} \Bigl( \sum_{j=1}^p S_jx_{ij}\tilde{\beta}_j \Bigr) = \sum_{j=1}^p \phi_jx_{ij}\tilde{\beta}_j
\end{align*}
and 
\begin{align*}
    \E_{S\sim q_\phi} M_i(S, \tilde{\beta})^2 &= \V_{S \sim q_\phi}\{ M_i(S, \tilde{\beta}) \} + \E_{S\sim q_\phi} \{M_i(S, \tilde{\beta})\}^2\\
    & = \V_{S \sim q_\phi} \biglp\sum_{j=1}^p S_jx_{ij}\tilde{\beta}_j\bigrp + \biglp\sum_{j=1}^p \phi_jx_{ij}\tilde{\beta}_j\bigrp^2\\
    & = \sum_{j=1}^p \V_{S_j \sim {\sf Ber}(\phi_j)}(S_jx_{ij}\tilde{\beta}_j)+\biglp\sum_{j=1}^p \phi_jx_{ij}\tilde{\beta}_j\bigrp^2\\
    & = \sum_{j=1}^p\phi_j(1-\phi_j)x_{ij}^2\tilde{\beta}_j^2 +\biglp\sum_{j=1}^p \phi_jx_{ij}\tilde{\beta}_j\bigrp^2 
\end{align*}
Putting everything together, 
\begin{align*}
-K(\phi) 
&\geq -\sum_{j=1}^p\phi_j[1+\log(p)+a\log(p)-0.5\log(1+\alpha\gamma)]\\
    & \qquad +\alpha\sum_{i=1}^n\biglb\log\frac{\exp(\eta_i)}{1+\exp(\eta_i)} - \frac{\eta_i}{2} + (y_i-\tfrac12) \sum_{j=1}^p\phi_j x_{ij}\tilde{\beta}_j \\
    & \qquad \qquad \qquad - \frac{1}{4\eta_i}\tanh{\frac{\eta_i}{2}}\bigls\sum_{j=1}^p\phi_j(1-\phi_j)x_{ij}^2\tilde{\beta}_j^2 +\biglp\sum_{j=1}^p \phi_jx_{ij}\tilde{\beta}_j\bigrp^2 - \eta_i^2\bigrp\bigrs\bigrb\\
    & \qquad -\sum_{j=1}^p \bigls\phi_j\log(\phi_j)+(1-\phi_j)\log(1-\phi_j)\bigrs.
\end{align*}

Taking the derivative of the above with respect to $\phi_j$, and setting $\omega_j = \log(\frac{\phi_j}{1-\phi_j})$, we have the following update equation for the $(t+1)^\text{st}$ iteration of $\phi_j$,
\begin{align}
    \omega_j^{(t+1)} & = \alpha\tilde{\beta}_j\sum_{i=1}^n ( y_i-\tfrac12) x_{ij} - \frac{\alpha\tilde{\beta}_j}{4}\sum_{i=1}^n\frac{1}{\eta^{(t)}_i}\tanh\biglp\frac{\eta^{(t)}_i}{2}\bigrp \biglp x_{ij}^2\tilde{\beta}_j+2x_{ij}\sum_{k\neq j}\phi^{(t)}_kx_{ik}\tilde{\beta}_k\bigrp \notag \\
    & \qquad +\tfrac{1}{2}\log(1+\alpha\gamma)-(a+1)\log(p)-1
    \label{eq:omega}
\end{align}
\begin{equation}
    \phi_j^{(t+1)} = \frac{\exp(\omega_j^{(t+1)})}{1+\exp(\omega^{(t+1)}_j)}
    \label{eq:phi}
\end{equation}
Our free parameter $\eta$ is updated with 
\begin{equation}
    \eta_i^{(t+1)} = \E_q^{1/2}\{ M_i(S)^2\} = \Bigl\{ \sum_{j=1}^p\phi^{(t)}_j(1-\phi^{(t)}_j)x_{ij}^2\tilde{\beta}_j^2 +\biglp\sum_{j=1}^p \phi^{(t)}_jx_{ij}\tilde{\beta}_j\bigrp^2 \Bigr\}^{1/2}
    \label{eq:eta}
\end{equation}

We provide details of our CAVI algorithm in Algorithm~\ref{algo}. Inputs for our algorithm include the data $X$ and observables $y$, a fixed estimator $\tilde{\beta}$, an initial value for $\phi$, a stopping threshold $\eps$, and a maximum number of iterations to ensure the algorithm would stop even without converging. The fixed estimator could be based on any other reliable method---both lasso and SCAD work fairly well---that is also not difficult to compute. In practice, we took our $\tilde{\beta}$ from SCAD, with a small caveat---the form of Equation~\eqref{eq:omega} necessitates that zero entries in $\tilde{\beta}$ will not be updated iteratively, so we take the zero entries from SCAD and add a small amount of noise to them. The initial value for $\phi$ could be also set to match $\tilde{\beta}$, but in practice, we have found that $\phi^{(0)} = (0.5, \ldots, 0.5)^\top$ works equally well. 

\begin{algorithm}[t]
	\caption{CAVI for variational empirical Bayes} 
 Input: data $(X, y)$; a fixed estimator $\tilde{\beta}$ based on another method such as SCAD or Lasso; an initial $\phi^{(0)}$ for the initial value of $\phi$; a stopping threshold $\epsilon$; a maximum number of iterations max.iter.
	\begin{algorithmic}[1]
    \State Initialize $\eta$ with Equation~\eqref{eq:eta} and $\phi^{(0)}$.
    \State Calculate $\omega^{(0)}$, $\omega^{(1)}$ by Equation~\eqref{eq:omega} and $\phi^{(1)}$ based on Equation~\eqref{eq:phi}.
    \State Set $t=1$.
		\While {$\max_j |H(\phi^{(t)}_j) - H(\phi^{(t-1)}_j)|>\eps$ and $t<$ max.iter} 
			\For {$j=1,2,\ldots,p$}
				\State Update $\omega^{(t+1)}_j$ by Equation~\eqref{eq:omega}
				\State Compute $\phi^{(t+1)}_j$ based on Equation~\eqref{eq:phi}
			\EndFor
			\State Update $\eta$ with Equation~\eqref{eq:eta}
            \State $t = t+1$
		\EndWhile
  		\State Return $\hat{\phi} = \phi^{(t)}$
	\end{algorithmic} 
 \label{algo}
\end{algorithm}

The stopping criterion evaluates the difference of $\phi$ values between consecutive iterations, and the algorithm stops when the difference is below a threshold $\eps$, specifically, following \citet{ray2020spike}, \citet{yang2020variational}, and \citet{huang2016variational}, we look at the maximum entropy criterion. We stop our algorithm when $\max_j |H(\phi^{(t)}_j) - H(\phi^{(t-1)}_j)|>\eps$, where $H: [0,1] \to \RR$ is defined as $H(z) = -z\log_2(z)-(1-z)\log_2(1-z)$.

For variable selection, the solution $\hat{\phi}$ from Algorithm~\ref{algo} offers an approximation of inclusion probabilities for each $\beta_j$ to be included in the true active set $S^\star$. For point estimation, we can then take the indices $j$ that satisfy $\hat{\phi}_j\geq 0.5$, and calculate the MLE $\hat{\beta}$ based on the model with the chosen set of indices.

\section{Results}
\label{S:chap4results}
We demonstrate the efficiency and accuracy of our method with numerical simulations. We compare our method, EB-VI, with a number of state-of-the-art methods for high-dimensional logistic regression. We first compare our method with other Bayesian methods including variational Bayes with Laplace prior \citep{ray2020spike} and with Gaussian prior \citep[e.g.][]{huang2016variational, ormerod.etal.2017}, and R packages varbvs \citep{carbonetto2012scalable}, SkinnyGibbs \citep{narisetty2018skinny}, and BinaryEMVS \citep{mcdermott2016methods}. We also compare our method to a few other popular frequentist and Bayesian methods, including horseshoe, lasso, adaptive lasso \citep{zou2006}, SCAD \citep{fanli2001}, and MCP \citep{chzhang2010}. Lastly, we compare our EB-VI to the MCMC method presented in detail in \citet{tang2024empirical}, denoted here as EB-MCMC, especially looking at the difference between the solution $\hat{\phi}$ obtained in EB-VI and the inclusion probabilities from EB-MCMC, to investigate whether our variational approximation effectively solves the original problem of interest. 

\subsection{Comparisons with other methods}
For the comparisons with other state-of-the-art Bayesian methods, we consider the same simulation settings as \citet{ray2020spike}. We look at five different simulation settings, with 200 runs each. In all five test settings, the entries of the design matrix $X$ are simulated from an independent Normal distribution with mean 0 and varying standard deviation $\sigma$, i.e., $X_{ij}\sim \N(0, \sigma^2)$. The number of active signals $s$ and the size of the signals $A$ both vary in the five tests. Tests 1--3 take $n=250$, $p=500$, and Tests 4--5 are higher-dimensional, with $n=2500$, $p=5000$. All tests place the nonzero signals at the beginning of the true coefficient vector. The specific settings are summarized below: 
\begin{description}
    \item[Test 1.] $n=250$, $p=500$, $\sigma=0.25$, $s=5$, and $A=4$
    \item[Test 2.] $n=250$, $p=500$, $\sigma=2$, $s=10$, and $A=6$
    \item[Test 3.] $n=250$, $p=500$, $\sigma=0.5$, $s=15$, and $A\sim \text{Unif }(-2,2)$
    \item[Test 4.] $n=2500$, $p=5000$, $\sigma=0.5$, $s=25$, and $A=2$
    \item[Test 5.] $n=2500$, $p=5000$, $\sigma=1$, $s=10$, and $A\sim \text{Unif } (-1,1)$
\end{description}

We compare two metrics, True Positive Rate (TPR) and False Discovery Rate (FDR), defined as, respectively, 
\[\text{TPR} = \frac{\text{TP}}{\text{TP}+\text{FN}} \quad \text{and} \quad \text{FDR} = \frac{\text{FP}}{\text{TP}+\text{FP}},\]
where TP is the number of truely positive signals identified by the algorithm, FN is the number of signals that the algorithm failed to identify, and FP is the number of variables that are actually noise but falsely selected by the algorithm. 

The results are shown in Table~\ref{table:EBVI1}, where rows 2 through 6 of both the TPR and FDR values are taken from Table 3 of \citet{ray2020spike}, and row 1 is from our newly proposed method. Our method does well in locating the signals in Tests 1, 2, and 4, as evidenced by the high TPR values. Tests 3 and 5 have the most challenging settings for signal discovery, as the TPR values are the lowest for all methods. In terms of FDR, our method does very well, obtaining very low FDR values in all five tests and outperforming the other methods in four of the five. Our EB-VI method is also very efficient, finishing each single run in around a second for the first three settings with $n=250$, $p=500$, on-par with the time required for the variational Bayesian with Laplace prior (VB-Laplace) method of ~\citet{ray2020spike}; for the higher-dimensional tests 4 and 5, our method generally took 2--3 minutes for a single run, slightly slower than the VB-Laplace method. 

\begin{table}[t]
\caption{Comparison of TPR and FDR for select Bayesian methods across five test settings. Rows 2--6 of both the TPR and FDR panels are taken from Table 3 in \citet{ray2020spike}. }
{\small
\begin{center}  
\begin{tabular}{cccccccccccc}
    \hline
     & Algorithm & Test 1 & Test 2 & Test 3 & Test 4 & Test 5\\
     \hline
         \multirow{6}{*}{TPR} & \textbf{EB-VI}& $0.96\pm0.10$ & $1.00\pm 0.00 $ & $0.30\pm 0.12 $ & $1.00\pm0.00$ & $0.31\pm0.09$\\
        & VB(Lap) & $0.99\pm0.06$ & $1.00\pm0.00$ &	$0.51\pm0.11$ &	$1.00\pm0.00$ &	$0.40\pm0.28$ \\
         & VB(Gauss) & $1.00\pm 0.01$&  $1.00\pm0.02$ & $0.54\pm0.11$ &$1.00\pm0.00$	&$0.85\pm0.06$ \\
         & varbvs& $1.00\pm0.00$ & $1.00\pm0.00$  &$0.68\pm0.11$	&$1.00\pm0.00$	&$0.87\pm0.06$ \\
         & SkinnyGibbs& $0.98\pm0.06$ &$1.00\pm0.02$ &$0.51\pm0.12$	& -- & -- \\
        & BinEMVS& $0.99\pm0.03$& $1.00\pm0.00$  &$0.58\pm0.11$ &-- & -- \\[3pt]
    \hline
        \multirow{6}{*}{FDR} & \textbf{EB-VI}& $0.03\pm0.08$& $0.03\pm0.05$ & $0.03\pm0.08$ & $0.00\pm0.01$ & $0.01\pm0.07$\\
        &  VB(Lap) & $0.49\pm0.11$ & $0.00\pm0.02$&	$0.41\pm0.14$ &	$0.01\pm0.02$&	$0.03\pm0.05$ \\
         & VB(Gauss) & $0.63\pm0.07$& $0.09\pm0.13$  &$0.52\pm0.12$	&$0.81\pm0.02$	&$0.95\pm0.01$ \\
         & varbvs&$0.93\pm0.01$ & $0.08\pm0.08$  &$0.83\pm0.03$	&$0.93\pm0.00$	&$0.91\pm0.01$ \\
         & SkinnyGibbs& $0.80\pm0.03$& $0.11\pm0.11$  &$0.71\pm0.07$	& -- & -- \\
        & BinEMVS& $0.43\pm0.14$ & $0.19\pm0.10$  &$0.63\pm0.10$ & -- & -- \\[3pt]
    \hline
\end{tabular}
\end{center}
}
\label{table:EBVI1}
\end{table}

We also compare EB-VI against a number of other methods discussed in \citet{tang2024empirical} under their same simulation settings. The methods include lasso, adaptive lasso, SCAD, MCP, horseshoe, and skinnyGibbs. We compare these methods using the follow metrics: sensitivity (TPR), specificity (TNR), and Matthews correlation coefficient (MCC). TPR is as defined above, and TNR and MCC are defined as
\[\text{TNR} = \frac{\text{TN}}{\text{TN}+\text{FP}}\] and
\[\text{MCC} = \frac{\text{TP}\times \text{TN} - \text{FP}\times \text{FN}}{\{(\text{TP}+\text{FP})(\text{TP}+\text{FN})(\text{TN}+\text{FP})(\text{TN}+\text{FN})\}^{1/2}},\] where TP, FP, and FN are as defined previously, and TN is the number of noise variables that are correctly categorized by the algorithm. 
For these simulations, we fixed sample size $n=100$ and varied $p \in \{200, 400\}$. The true $\beta^\star$ has its nonzero components set to 3, with the number of nonzero components $s$ varying between 4 and 8. The design matrix $X$ has its rows randomly drawn from a multivariate Gaussian distribution with mean 0, variance 1, and covariance matrix $\Sigma$, where $\Sigma_{ij} = r^{|i-j|}$, a first-order autoregressive correlation structure, with varied $r \in \{0, 0.2\}$. We draw independent response variables, with $y_i\sim \mathsf{Ber}(\{1 + \exp(-x_i^\top\beta^\star)\}^{-1})$, for $i = 1,\ldots,n$. We run 500 replications at each of the eight settings, and the results are summarized in Table~\ref{table:EBVI2}.


The right-most column includes results for our new method, EB-VI, while the other columns are taken from Table 1 of \citet{tang2024empirical}. We see that EB-VI does fairly well across these settings for all three metrics. Our method has very high True Negative Rate (TNR) across all settings and in three out of the eight total settings, EB-VI has the highest MCC value. As the sparsity level decreases (the settings with $|S|=8$), it is harder to identify all the signals, we see the TPR values for our method suffer a little, but is still on-par with other methods, and generally similar to the TPR values from the EB-MCMC method (denoted EB1 and EB2 in Table~\ref{table:EBVI2}).

\begin{table}[t]
\caption{Comparison of TPR, TNR, and MCC for select frequentist and Bayesian methods across various settings. Columns 1--8 are taken from Table 1 of \citet{tang2024empirical}. Column 9 are results from the newly-proposed EB-VI. }
{\footnotesize
\begin{center}  
\begin{tabular}{ccccccccccccc}
    \hline
    $p$ & $|S|$ & $r$ & Metric & EB1 & EB2 & HS & lasso & alasso & SCAD & MCP & skinny & {\bf EB-VI} \\
     \hline
         \multirow{3}{*}{200} &  \multirow{3}{*}{4} & \multirow{3}{*}{0} & TPR & 0.980 & 0.966 &	0.871& 0.998	&	0.701&	1.000	&0.999 & 0.997 & 0.990\\
         & & &  TNR &0.971& 0.992 &1.000&	0.953&	0.854 &0.962 &	0.986 &0.999 & 0.999 \\
          & & & MCC & 0.676	&0.859&	0.927	&0.630	&0.228	&0.597 &0.783 &0.980 & 0.976\\[3pt]
             \multirow{3}{*}{200} &  \multirow{3}{*}{4} & \multirow{3}{*}{0.2}& TPR &  0.962 & 0.926 &	0.786&	0.995 &    0.806&	0.999 & 0.996 & 0.985 & 0.995\\
         &  &  &TNR & 0.975& 0.994  &	1.000&	0.966 &0.847& 0.955& 0.983 & 0.999 & 0.999 \\
          &  & & MCC &  0.691 &0.856 &	0.878&	0.708&  0.272& 0.561& 0.744& 0.973& 0.979\\[3pt]
             \multirow{3}{*}{200} &  \multirow{3}{*}{8} & \multirow{3}{*}{0}& TPR &  0.746 & 0.643 &	0.276&	0.959 & 0.579&	0.962&	0.925 & 0.813 & 0.649\\
         &  &  &TNR & 0.940	&0.983&	1.000&	0.895&	0.851&	0.948&	0.982 & 0.998 & 0.999\\
          &  & & MCC &  0.519 & 0.618&	0.498&	0.528 &  0.206& 0.640& 0.790& 0.865 & 0.780\\[3pt]
             \multirow{3}{*}{200} &  \multirow{3}{*}{8} & \multirow{3}{*}{0.2}& TPR &  0.747& 0.648 &0.287	&0.953&	0.705&	0.951&	0.888 &0.714 & 0.655\\
         &  &  &TNR & 0.956&0.988&	1.000&	0.932& 0.837& 0.952& 0.984&0.998 & 0.999\\
          &  & & MCC & 0.570& 0.663& 0.514& 0.631&  0.268&0.644 &0.780& 0.806& 0.788\\[3pt]
             \multirow{3}{*}{400} &  \multirow{3}{*}{4} & \multirow{3}{*}{0}& TPR & 0.922&	0.897& 0.582&	0.993& 0.741& 1.000&	1.000 &0.980 & 0.990\\
         &  &  &TNR & 0.989&	0.996&	1.000	&0.973&	0.924	&0.974&	0.991 & 0.998 & 1.000\\
          &  & & MCC &  0.728	&0.841&	0.728	&0.618&	0.265&0.543&	0.744 & 0.920 & 0.983\\[3pt]
         \multirow{3}{*}{400} &  \multirow{3}{*}{4} & \multirow{3}{*}{0.2} & TPR & 0.926&0.860	&0.568	&0.994&	0.836&	0.998	&0.995 & 0.952 & 0.983\\
         & & &  TNR & 0.993	&0.998	&1.000&	0.980&	0.931	&0.971	&0.990 &0.998 & 1.000 \\
          & & & MCC & 0.795&	0.860&	0.737	&0.685	&0.336	&0.514	&0.718 & 0.899& 0.983\\[3pt]
             \multirow{3}{*}{400} &  \multirow{3}{*}{8} & \multirow{3}{*}{0}& TPR &  0.490&	0.407&	0.057	&0.896&	0.468	&0.917	&0.843 & 0.498 & 0.449 \\
         &  &  &TNR & 0.974&	0.987&	1.000	&0.941	&0.930	&0.962	&0.987 & 0.996 & 1.000 \\
          &  & & MCC & 0.406	&0.401	&0.145&	0.493&	0.196&	0.546	&0.694 & 0.591 &0.617\\[3pt]
             \multirow{3}{*}{400} &  \multirow{3}{*}{8} & \multirow{3}{*}{0.2}& TPR &  0.548&	0.438&	0.092&	0.931	&0.617&	0.922	&0.842 & 0.565 &0.485 \\
         &  &  &TNR & 0.983& 0.992 &1.000&	0.958	&0.921	&0.965&	0.989 &0.996 &1.000 \\
          &  & & MCC &  0.513&	0.501	&0.226&	0.580	&0.260	&0.564&	0.714 &0.646 &0.674\\
    \hline
\end{tabular}
\end{center}
}
\label{table:EBVI2}
\end{table}

\subsection{Comparisons with EB-MCMC}
\label{SS:eb.vi.compare}

We discussed the theoretical connections in Section~\ref{S:VI} between the proposed EB-VI solution in this paper and the EB-MCMC solution from \citet{tang2024empirical}.  Here, we demonstrate empirically that the relevant features of our newly proposed EB-VI method well-approximate the MCMC inclusion probabilities. To compare the $\hat{\phi}$ against the inclusion probabilities, denoted by $\hat{\pi}^n$, we will look at the following metric,
\begin{equation}
    D = \bigl( p^{-1} \, \E \|\hat{\pi}^n-\hat{\phi}\|^2 \bigr)^{1/2},
\label{eq:D}
\end{equation}
where the expectation is with respect to data. Note that the $\hat{\phi}$ here is not exactly the minimizer to the Kullback--Leibler divergence of $\pi^n$ from $q_\phi$, but the solution to the CAVI algorithm detailed in Algorithm~\ref{algo}. $\hat{\pi}^n$ is the inclusion probabilities for the empirical priors posterior distribution, as approximated by MCMC with $M=10,000$ samples. 

Table~\ref{table:VIvsMCMC} provides a comparison of the distance $D$ across select settings, each setting with 100 runs. The true coefficient vector $\beta^\star = (A, \ldots, A, 0, \ldots, 0)^\top$, with signal size $A$ and the number of true signals $s$. The entries $X_{ij}$'s in the design matrix $X$ are independent Gaussians, i.e., $X_{ij} \iid \N(0,1)$. As is evident from Table~\ref{table:VIvsMCMC}, the distances between $\hat{\phi}$ and $\hat{\pi}^n$ are small across the different settings tested, which is another indication that numerically, our CAVI algorithm is approximating the inclusion probabilities from MCMC well. In terms of runtime, as one would expect, EB-VI is much faster than EB-MCMC. For a dataset that takes EB-MCMC around 30 seconds to run, it would take EB-VI roughly between 0.05 to 0.1 seconds. 

\begin{table}[t]
\caption{The distance $D$ as defined in Equation~\eqref{eq:D} between the EB-VI solution and the EB-MCMC inclusion probabilities for various settings. }
\begin{center}  
\begin{tabular}{c|ccccccccccc}
    \hline
     $(n, p, s, A)$ & (100, 200, 4, 3) & (100, 200, 6, 3) & (100, 200, 4, 6) & (200, 400, 4, 3) &  \\
     \hline
       $D$   & 0.080 & 0.098 & 0.046 & 0.052\\
    \hline
\end{tabular}
\end{center}
\label{table:VIvsMCMC}
\end{table}

\section{Discussion} 
\label{S:chap4discuss}
In this paper, we propose a novel variational approximation directly on the marginal posterior for $S$ for variable selection in high-dimensional logistic regression. Following \citet{tang2024empirical} and \citet{lee.chae.martin.glm}, we use an empirically-centered prior, which yields a marginal posterior $S$ that has a simple form after Laplace approximation. Our simple independent-Bernoulli approximation for $S$ shrinks the variational parameter space from what is typically $3p$ to just $[0,1]^p$, streamlining the CAVI algorithm for computations.  Thanks to its relative simplicity, we can easily prove (Theorem~\ref{thm:vb.consistent}) that our proposed variational approximation shares the same strong selection consistency property as the marginal posterior it is approximating. Simulations show that our method is efficient and produces good results, on par with existing state-of-the-art methods. 

We explored the relationship between the solution of our variational approximation and inclusion probabilities obtained with MCMC through simulations, and confirmed that the variational method does well in approximating the posterior distribution empirically. Theoretical properties of the relationship between the variational approximation and the original posterior of interest has yet to be explored, and would be an area of focus for further research. 
Finally, the proposed methodology can be extended beyond the logistic regression case to approximate the marginal posterior distribution for other high-dimensional GLMs; the challenge is finding a counterpart to the lower bound \eqref{eq:logisticbound} above for the log-likelihood for other GLMs.

\bibliographystyle{apalike}
\bibliography{mybib.bib}

\end{document}